\newtheorem{theorem}{Theorem}
\newtheorem{remark}{Remark}
\newtheorem{corollary}{Corollary}
\def\P{{\mathbb P}}  %%%   appears in many equations  Prob
\def\Y{{\mathcal Y}}
\begin{document}

\title{Performance Characterization Using AoI in a Single-loop Networked Control System}

\author{\IEEEauthorblockN{Jaya Prakash Champati\IEEEauthorrefmark{1}, 			     Mohammad H. Mamduhi\IEEEauthorrefmark{2}, Karl H. Johansson\IEEEauthorrefmark{2},
    James Gross\IEEEauthorrefmark{1}}
    \IEEEauthorblockA{\IEEEauthorrefmark{1} Division of Information Science and Engineering, EECS, KTH Royal Institute of Technology, Stockholm, Sweden}
    \IEEEauthorblockA{\IEEEauthorrefmark{2} Division of Automatic Control, EECS, KTH Royal Institute of Technology, Stockholm, Sweden}
$\{$jpra,mamduhi,kallej,jamesgr$\}$@kth.se
\thanks{This work has been partially supported by the Swedish Research Council VR under grant 2016-04404, and Knut and Alice Wallenberg Foundation.}
}

\maketitle

\begin{abstract}
The joint design of control and communication scheduling in a Networked Control System (NCS) is known to be a hard problem. Several research works have successfully designed optimal sampling and/or control strategies under simplified communication models, where transmission delays/times are negligible or fixed. However, considering sophisticated communication models, with random transmission times, result in highly coupled and difficult-to-solve optimal design problems due to the parameter inter-dependencies between estimation/control and communication layers. To tackle this problem, in this work, we investigate the applicability of Age-of-Information (AoI) for solving control/estimation problems in an NCS under i.i.d. transmission times. Our motivation for this investigation 
%stems from the fact that a lower AoI may result in a lower control cost in an NCS as it leads to access to fresher state information.
stems from the following facts: 1) recent results indicate that AoI can be tackled under relatively sophisticated communication models, and 2) a lower AoI in an NCS may result in a lower estimation/control cost. 
We study a joint optimization of sampling and scheduling for a single-loop stochastic LTI networked system with the objective of minimizing the time-average squared norm of the estimation error. We first show that  under mild assumptions on information structure the optimal control policy can be designed independently from the sampling and scheduling policies. We then derive a key result that minimizing the estimation error is equivalent to minimizing a function of AoI when the sampling decisions are independent of the state of the LTI system. 
%The parameters in the function of AoI include the LTI matrix and the covariance of exogenous noise in the LTI system. 
Noting that minimizing the function of AoI is a stochastic combinatorial optimization problem and is hard to solve, we resort to heuristic algorithms obtained by extending existing algorithms in the AoI literature. We also identify a class of LTI system dynamics for which minimizing the estimation error is equivalent to minimizing the expected AoI.

\end{abstract}

\section{Introduction}\label{sec:intro}
%%%% Establish the problem context %%%%

In the recent past, there has been an ever increasing interest in studying Networked Control Systems (NCS) that support time-critical-control-loop applications which include, among many others, smart grids, Internet-of-Things (IoT), sensor networks and augmented reality \cite{Baillieul_IEEE,Gupta_IEEE}. In such applications a status update that is received after certain duration of its generation time may become stale at the receiver and the control decision taken based on this stale sample may lead to untimely feedback and hence undesired control action. Thus, the freshness of the status updates at the receiver plays a key role in the design of such systems wherein time is more critical at the receiving end. 
%The notion of Age of Information (AoI), that is proposed in~\cite{kaul_2011a}, has emerged as a novel metric to quantify the freshness of the received information and is shown to be specifically useful within the scenarios that timeliness is of importance at the receiver. It is defined as the time elapsed since the generation of the latest successfully received status update at the destination.
%%%% Motivate the problem you study by noting what was done and not done %%%%

Even though NCSs have been studied extensively from the control perspective focusing on optimizing control performance, majority of the works have considered designing sampling and/or control strategies over simplified networking/communication models with idealized assumptions wherein status updates are assumed to have zero or constant transmission delay. Even for the fundamental problem of Minimum Mean Square Error (MMSE) estimation, results are scarce for computing optimal event-based sampling strategies when the transmission delays in the network between the sampler and the estimator are i.i.d. This can be attributed to the complicated parameter inter-dependencies between estimation/control and communication layers, that arise due to the end-to-end delay, resulting in a highly coupled and difficult-to-solve optimal design problems~\cite{Chiang_IEEE}. 
%In fact, designing an optimal control mechanism is enough challenging even for a communication policy with deterministic transmission delay. 
However, as the emerging networked control applications are envisioned to be running on the edge in future wireless networks, it is necessary to consider more realistic communication models wherein the transmission delays in the network are non-negligible and random. %requests from the control loops cannot be serviced with negligible or constant delay.
% to acquire a comprehensive understanding of both control-related and communication-related design challenges considering timeliness issues. %The introduction of the AoI metric has provided a strong and applicable analytical framework to investigate control-related challenges considering more complicated communication models. 
%Although it is shown that AoI is not always a consistent alternative of delay in NCSs~\cite{Sun2017_a}, it can be used as an appropriate timeliness metric within the scenarios that time constraints are more critical from the receiver's perspective. 
%More precisely, since AoI, unlike latency and induced delay, is a continuous function of time, it can potentially result in tractable problem formulations for the joint optimization of communication scheduling and control costs in networked systems. 
%Fortunately, since AoI, unlike latency and induced delay, is a continuous function of time and inherently incorporates the timeliness of status updates, it can be leveraged in formulating tractable problems for the joint optimization of control costs and communication scheduling under realistic communication models. In fact, since the evolution of AoI can be tracked at the receiver, we conjecture that it might facilitate the design of the optimal control and communication scheduling in a decomposed fashion. To achieve this, we need a precise understanding of the relationship between AoI, communication scheduling, and control performance. 
%Further, to achieve optimality, the appropriate age function needs to be carefully selected.

Recently, the Age of Information (AoI) metric, proposed in~\cite{kaul_2011a}, has emerged as a novel metric to quantify the freshness of the received status updates and has attracted significant attention from communication and networking community. It is defined as the time elapsed since the generation of the latest successfully received status update at the destination. 
%An interesting property of AoI is that it increases at both low and high sampling rates for queuing systems using First-Come-First-Serve (FCFS) policy~\cite{yates_2012a}. 
Several works have studied the problem of minimizing some function of AoI under different queuing and communication models~\cite{kaul_2012b,yates_2012a,yates_2015a,Champati_2018a,Sun_2017}. While the works in~\cite{kaul_2012b,yates_2012a,yates_2015a} consider time averaged AoI, the authors in~\cite{Champati_2018a} consider minimizing the tail of the AoI, and the authors in~\cite{Sun_2017} consider any non-decreasing and measurable function of AoI. Apart from studying the effects of communication scheduling on AoI, none of the above works consider estimation/control objectives in networked systems. Nonetheless, we would like to note that:
\begin{enumerate}
\item All the above works minimize some function of AoI assuming the transmission times are either i.i.d. or Markovian.
\item A general consensus is that, a lower AoI in an NCS may result in a lower estimation/control cost, because having access to fresher state information often improves the estimation/control performance.
\end{enumerate}
While item 1) suggests that AoI could be tackled under relatively sophisticated communication models, item 2) suggests that the solutions proposed for AoI could be considered for studying estimation/control costs. 
Given the above facts, the question we would like to pursue is whether the scheduling strategies proposed in the AoI literature, under sophisticated communication models, could be used or extended to minimize estimation/control costs in an NCS. Answering this question will not only shed light on the potential use of scheduling strategies proposed in the AoI literature for networked control, but further motivates the work on the AoI metric under more realistic communication models. However, to achieve this, we need a precise understanding of the relationship between AoI and estimation/control costs in networked systems. 
%In the AoI literature, with the exception of \cite{Sun2017_a}, there is no work on characterizing the fundamental relation between AoI and estimation/control objectives. 

The authors in~\cite{Sun2017_a} have studied the MMSE problem with i.i.d. transmission delays for Wiener process estimation. They have shown that the estimation error is a function of AoI if the sampling decisions are independent of the observed Weiner process; otherwise, the estimation error is not a function of AoI.
%Further, they show via simulations that the mean square error achieved by the optimal sampling policy can be much smaller than that of the sampling policy that minimizes expected AoI. %In \cite{Sun_arxiv} an alternative notion of age is proposed based on the \textit{mutual information} between a Markov source and a receiver, and it is shown that this alternative can be expressed as a non-decreasing function of age. They then showed that the optimal sampling policy is indeed a threshold-based policy. 
%However, a relation between AoI and MMSE has not been explicitly characterized.
%%%% What is your approach to addressing the problem? Any challenges involved %%%%
Following this line of research, in this paper we examine the relation between AoI and a typical control cost for a Linear-Time-Invariant (LTI) system. In particular, we consider a single-loop LTI stochastic networked system, shown in Figure~\ref{fig:NCS}, where the next state of the system is a linear function of the current state, the control input and an associated Gaussian noise. The system is equipped with an event-based sensor that decides about the next sampling instant. The samples/status updates are delivered to an estimator by a communication link having random transmission time per status update. The freshness of the status updates received at the estimator is a function of the sampling policy, communication scheduling policy, and the distribution of the transmission times. Using the received updates, the estimator computes the current state of the plant and feeds it to a controller that computes the control input, which is instantaneously available at the actuator and the feedback loop is closed. 

\begin{figure}[t]
\centering
\includegraphics[width = 3.2in]{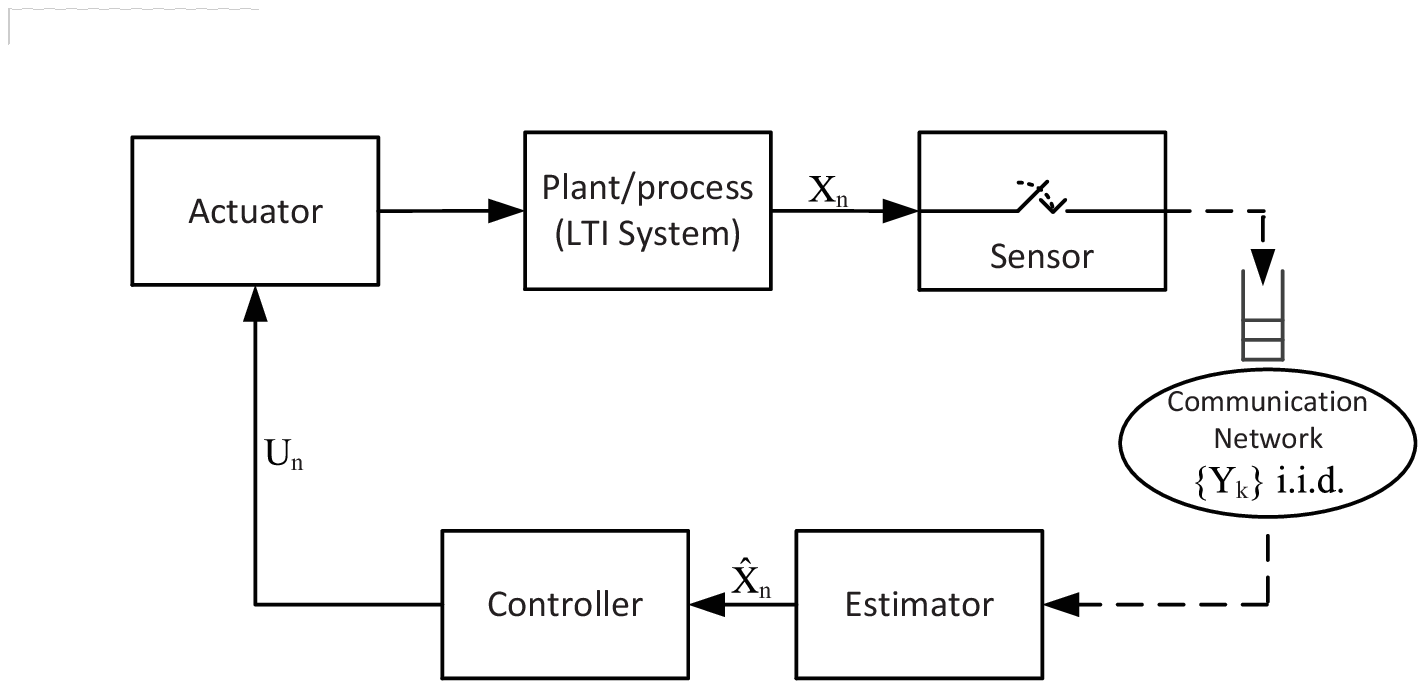}
\caption{Single-loop network control system.}
\label{fig:NCS}
\end{figure}

%The main contribution of this paper is to characterize the fundamental relationship between the AoI metric and typical performance metrics used to evaluate the efficacy of estimation and control in NCSs. 
We first note that under some mild assumptions on the information structures of the control unit and the sampling unit, the optimal control policy can be designed independently from the sampling policy, and then the optimal sampling policy minimizes a function of mean square estimation error \cite{Maity_LCSS}. We then show that the latter objective function can be expressed as a function of AoI. Having these, we derive the optimal design for the single-loop NCS with the cost as a function of AoI. Noting that minimizing the function of AoI in our problem setting is a stochastic combinatorial optimization problem and is hard to solve, we study heuristic algorithms by extending existing algorithms in the AoI literature~\cite{Sun_2017}.

The rest of this paper is organized as follows. Section II introduces the problem statement and the overall system model. Section III presents the main results on performance characterization in terms of AoI and heuristics to solve the problem. Numerical results are presented in Section IV, and we conclude the paper in Section V. 

\section{Problem Statement}
We study a single-loop NCS, shown in Figure~\ref{fig:NCS}, where a single non-scalar controlled linear time-invariant system closes the loop between the plant (collocated with the sensors) and the controller (collocated with the estimator) through a communication network. The link from the controller to the plant is supposed to be a direct error-free connection sending the control signals to the actuator in a timely manner. The system time is slotted and $n \in \mathbb{Z}_{\geq 0}$ denotes a time slot, where $\mathbb{Z}_{\geq 0}$ is the set of non-negative integers.  The LTI system evolves linearly as follows:
\begin{align}\label{eq:state}
X_{n} = A X_{n-1} + B U_{n-1}+W_{n-1},
\end{align}
where $X_n\in\mathbb{R}^d$ is the state of the system at time slot $n$, $d$ is the system dimension, $U_n\in\mathbb{R}^q$ represents the control input, and $W_n \sim \mathcal{N}(0,\Sigma)\in\mathbb{R}^d$ is the exogenous noise having multi-variate Gaussian distribution with zero mean and covariance $\Sigma$. 
%Can the above system be considered in continous time? Does reducing the time slot effect the solution? Maybe not.
The constant matrices  $A\in \mathbb{R}^{d\times d}$ and $B \in \mathbb{R}^{d\times q}$ represent the system and input matrices, respectively. The noise realizations are assumed to be mutually independent and identically Gaussian distributed, hence, we can re-write $\Sigma = \sigma^2 I_d$, where $I_d$ is a $d\times d$ identity matrix and $\sigma^2$ is the variance. In addition, we assume that the noise realizations are independent from the initial condition $X_0$, which itself is presumed to be selected from any arbitrary random distribution with finite moments.  

A sensor samples the state of the system according to a sampling policy $\mathbf{g}$, which specifies the time slots in which samples are generated. The samples are submitted to a work-conserving server, e.g., a communication link, which can store them in a queue. The server transmits the samples/packets to the control unit, that consists of an estimator and a controller, using a non-preemptive scheduling policy $\mathbf{\pi}$. The scheduling policy $\mathbf{\pi}$ decides when a packet that is queued will be transmitted. The transmission link between the controller and the actuator is assumed to be perfect, i.e., the actuator receives the control signal as soon as the control input is executed. We use $k$ to denote the index of a sample that is received $k$th in the order at the receiver. Let $Y_k \in \{0,1,2,\ldots\}$ denote the transmission time of the sample $k$. We assume that $Y_k$'s are i.i.d. for all $k$, and $\mathbb{E}[Y] < \infty$. We use $D_k$ to denote the departure time of sample $k$.

In this work, we consider both sampling and scheduling policies are \textit{stationary randomized} policies. 
%A stationary randomized policy is a causal policy, where the decision in slot $n$ is determined by the transmission times $Y_k$ till time slot $n$ and the decisions in slots $i < n$. 
Under a stationary randomized policy, the decision in slot $n$ is completely determined by the transmission time of the most recently transmitted packet and some fixed probability measure. We formalize this definition shortly in Section~\ref{sec:equivalence}. Let $\mathcal{G}_\text{SR}$ and $\Pi_\text{SR}$ denote the set of all stationary randomized sampling policies and stationary randomized scheduling policies, respectively. 

In time step $n$, let $\delta_n$ represent the sampler's decision either to sample the state of the system or not, i.e., 
\begin{equation*}
   \delta_n=\begin{cases} 1, & \qquad X_n \;\text{is sampled}, \\ 0, & \qquad  \text{otherwise}. \end{cases}
  \end{equation*}

At the control unit, we denote the age of information at time $n$ by
$\Delta_n = n - T_n$, where $T_n$ is the most recent generation time of any packet that is received by time $n$. Note that $\Delta_n$ depends on the sampling policy. Also, we note that $\Delta_n$ increases by a unit step in each time slot until the departure of some packet and it drops to a value equal to the system delay of that packet. 

In \cite{6475979, Moli_CE} the necessary conditions for the optimal controller to be of the certainty equivalence (CE) form are derived for the NCS scenario with state-based sensor data sampling. Inspired by the mentioned works, we use their results to show that the optimal control policy in our problem can be considered to be in the class of CE control under some mild assumptions on the information structure. It is discussed in \cite{6475979} that the optimal control is CE if 1) the sampling decision $\delta_n$ at every time-step $n$ is independent of the applied control inputs $\{U_0,\ldots,U_{n-1}\}$, and, 2) an error-free instantaneous acknowledgement channel exists between control and sampling sides to inform the sampler about data delivery status. More precisely, the mentioned conditions guarantee independence of the sampling decisions from the control actions. Assuming that the sampler has access to the full plant information together with the instantaneous data delivery acknowledgment from the controller, the sampler can reconstruct the statistics of the input signals by an estimator installed at the sampling side. As we confine our attention to the class of stationary randomized policies, the sampler can be assumed to be dependent on the primitive parameters of the system, i.e., $\{X_0, W_0,\ldots, W_n\}$, the constant parameters $A, B, \Sigma$, and the scheduling outcome through the acknowledgement while any dependency on the control inputs appears in statistical form computed at the sampler. Therefore, we can rely on the results of \cite{6475979, Moli_CE} on the optimality of CE controller assuming that the sampler has access to the mentioned information set and a timely acknowledgement signal is  exchanged between controller and sampler. 

%We assume that the information available at the sampler's side at time $n$ (denoted by $\mathcal{I}_n$) essentially to decide on the sampling decision $\delta_n$, and at the control and estimator's side (denoted by $\bar{\mathcal{I}}_n$) essentially to compute the control input $U_n$, are as follows:
%\begin{align*}
%\mathcal{I}_n&=\{X_0,\ldots,X_n,U_0,\ldots,U_{n-1},\delta_0,\ldots,\delta_{n-1},\Delta_n\},\\
%\bar{\mathcal{I}}_n&=\{X_0,U_0,\ldots,U_{n-1},\delta_0,\ldots,\delta_{n-1},\Delta_n, X_{n-\Delta_n}\}.
%\end{align*}
%To realize the sketched information structure, it is required that the sampler is informed about age of information at the control side. This means that deliveries of data packets at the control unit should be acknowledged such that the sampler receives this information.

%Under the given information structure, we can see that the information available at the control side is at every time-step $n$ a subset of the information available at the sampler. This information structure is called \textit{partially nested} and it provides some flexibilities in designing the control policy independently from the sampling policy. Inspired by \cite{Moli_CE}, 

The optimal control policy can then be expressed in form of certainly equivalence control as follows: 
\begin{equation}\label{control-CE}
U_n= -K_n \mathbb{E}[X_n|\mathcal{I}_n],
\end{equation}
where $K_n$ is the optimal control gain, and $\mathcal{I}_n=\{X_0,U_0,\ldots,U_{n-1}, \Delta_n, X_{n-\Delta_n}\}$ is the controller information history at time-step $n$. It should be noted that any possible design of the sampler neither affects the control law nor the control gain $K_n$, but affects the estimation precision $\mathbb{E}[X_n|\mathcal{I}_n]$. Moreover, due to the independence of the design of the control law from the sampling process, we can design the control input optimally according to the desired optimality criteria, e.g. LQG control (see \cite{Moli_CE} for detailed derivation of the optimal control gain based on LQG control). This affects only the control gain $K_n$, and we omit it for the purpose of brevity. The computed control signal is fed back to the actuator and we can then express the closed-loop dynamics as follows:
\begin{equation*}
X_n\!=\! \left(A-BK_{n-1}\right)X_{n-1}+BK_{n-1}\!\left(X_{n-1} - \hat{X}_{n-1}\right)+W_{n-1}.
\end{equation*}

The estimator at the control side computes the estimation $\mathbb{E}[X_n|\mathcal{I}_n]$. To derive the dynamics of the estimator, let us first derive the dynamics of the system state $X_n$ as a function of the defined age of information $\Delta_n$. Using~\eqref{eq:state} we conclude
\begin{align}\label{eq:state1}
X_{n} &= A^{\Delta_n}X_{n - \Delta_n} + \sum_{j=1}^{\Delta_n}A^{j-1}W_{n-j}\\\nonumber
&+B U_{n-1}+AB U_{n-2}+\ldots+A^{\Delta_{n-1}}B U_{n-\Delta_n}
\end{align}

Taking the expectation from the expression (\ref{eq:state1}) conditioned on $\mathcal{I}_n$, the estimated system state $\hat{X}_n$ can be expressed as
\begin{align}\label{eq:estimator}
\hat{X}_n &= \mathbb{E}[X_n|\mathcal{I}_n] = A^{\Delta_n}X_{n - \Delta_n}\\\nonumber
&+B U_{n-1}+AB U_{n-2}+\ldots+A^{\Delta_{n-1}}B U_{n-\Delta_n}.
\end{align}
%In the second step above we have used~\eqref{eq:state1}.
Using~\eqref{eq:state1} and~\eqref{eq:estimator}, we can compute the estimation error $e_n$ at the estimator side as follows:
\begin{align}\label{eq:error}
e_n = X_n - \hat{X}_n = \sum_{i=1}^{\Delta_n}A^{i-1}W_{n-i}.
\end{align}

\begin{comment}
Taking expectation from the subsequent equation results in the expectation of the square norm of the estimation error
\begin{align*}
\mathbb{E}[\|e_n\|_2^2]=\sum_{r=0}^{\Delta_n-1} \textsf{Tr}\left(A^{r^\top}A^{r}\Sigma\right),
\end{align*}
where $\Sigma$ is the covariance matrix of the random Gaussian noise, and $\textsf{Tr}(\cdot)$ denotes the trace operator.
\end{comment}

We are interested in finding stationary randomized policies $\mathbf{g}$ and $\mathbf{\pi}$ that minimize the time-average squared error norm, i.e., we aim to solve $\mathcal{P}$, where
\begin{align*}
\mathcal{P}:\; \underset{(\mathbf{g}\in \mathcal{G}_\text{SR},\mathbf{\pi}\in \Pi_\text{SR})}{\text{minimize}} \lim_{n \rightarrow \infty} \frac{1}{n} \sum_{j=0}^{n-1}\|e_j\|_2^2.
\end{align*}
%From~\eqref{eq:error}, we infer that if the AoI process $\Delta_n$ is stationary, then the error process $e_n$ will also be stationary, since $W_n$ are assumed to be i.i.d. The stationarity of $\Delta_n$ is guaranteed under stationary randomized policies considered in this paper. Therefore, in $\mathcal{P}$, $e_n$ will be a stationary process and the limit exists. 
Our goal is to establish a concrete relation between the objective function of $\mathcal{P}$\footnote{In the domain of stationary randomized policies considered in this paper, $\Delta_n$ is stationary and ergodic. This is also true for the error process $e_n$, given in~\eqref{eq:error}, which turns out to be stationary and ergodic. Thus, the limit in the objective of $\mathcal{P}$ exists.}  and $\Delta_n$. In particular, we show that solving $\mathcal{P}$ is equivalent to minimizing a specific function of $\Delta_n$. Given this equivalence, we use algorithms from the AoI literature to solve $\mathcal{P}$.

\section{Joint Optimization of Sampling and Scheduling Policies}\label{sec:equivalence}
%\subsection{Event-Based Sampling}
Note that under the sampling policy $\mathbf{g}$, a sample can be generated in any time slot. However, there is no advantage in generating samples and storing them in a queue while a sample is being transmitted. To see this, from~\eqref{eq:estimator} and~\eqref{eq:error} we infer that the estimation of the system state using a recent state results in a lower error than that of using an older state. Thus, when the transmission of a packet is finished, sampling the current state and transmitting the packet results in lower error than transmitting older packets stored in the queue. Therefore, $\mathbf{\pi}$ is degenerate and the samples are never queued. The  sampling policy then can be defined as $\mathbf{g} \triangleq \{G_k, k \geq 1\}$, where $G_k$ is a decision variable which represents the number of time slots the system waits before generating a new sample $k$ after the transmission of sample $(k-1)$. We assume that $G_k \in \{0,1,\ldots,M\}$ for all $k$, where $M < \infty$ denotes the maximum waiting time tolerated by the system\footnote{For the sake of simplicity in exposition we abuse the notation by using $G_k$, which actually is a mapping from the domain of the sampling policy $\mathbf{g}$, at the departure instant of sample $k$, to $\{0,1,\ldots,M\}$.}. 

Under a causal sampling policy $\mathbf{g}$, $G_k$ is determined by the observations $\{Y_i, i\leq k\}$ and previous decisions $\{G_i, i \leq k - 1\}$. Let $\mathcal{G}$ denote the set of all causal policies. A stationary randomized policy is a causal policy under which $G_k$ is assigned a value from $\{0,1,\ldots,M\}$ based on $Y_{k-1}$ and a fixed probability measure. In the following we formally define the stationary randomized policies $\mathcal{G}_\text{SR}$.
\begin{align*}
\mathcal{G}_\text{SR} \triangleq \{& \mathbf{g} \in \mathcal{G}: \forall a \in \{0,1,\ldots,M\}, \P\{G_k \leq a|Y_{k-1} = y\} \text{ is} \\ &\text{independent of} \;k\}.
\end{align*}

%The authors in~\cite{Sun_2017} have solved the event-based sampling problem for minimizing any non-negative and non-decreasing function of age. For stationary sampling policies we can (with some effort) reformulate the objective function of $\mathcal{P}$ as a specific function of age and the solution proposed in~\cite{Sun_2017} can be used to compute the sampling times. 

%\textbf{\textit{Remark:}} For this basic estimation problem with a single source, there isn't much room for contribution for event-based sampling. Nevertheless, the solution given in~\cite{Sun_2017} is highly non-trivial and is proposed for a general function of age. It would be a good exercise for us to compute the solution for the specific function of age we obtain with respect to the objective function of $\mathcal{P}$. 
\begin{theorem}\label{thm:equivalence:event}
$\mathcal{P}$ is equivalent to $\tilde{\mathcal{P}}$, almost surely, where
\begin{equation*}
\begin{aligned}
& \tilde{\mathcal{P}}:&& \min_{\mathbf{g} \in \mathcal{G}_\text{SR}} \quad \mathbb{E}[f(\Delta)]\\
& \text{s.t.} && G_k \in \{0,1,\ldots,M\},\, \forall k. 
\end{aligned}
\end{equation*}
where the function $f:\mathbb{Z}_{\geq 0} \rightarrow \mathbb{R}^{+}$, and is given by,
\begin{align*}
f(\Delta) = \sum_{i=0}^{\Delta-1} \textsf{Tr}\left(A^{i^\top}A^{i}\Sigma\right),
\end{align*}
$\textsf{Tr}(\cdot)$ is the trace operator, and
\begin{align}\label{eq:ExpfDelta}
\mathbb{E}[f(\Delta)] = \frac{\mathbb{E}\left[\sum_{j=Y_{k-1}}^{Y_{k-1} + G_k + Y_k - 1} \sum_{i=0}^{j-1} \textsf{Tr}\left(A^{i^\top}A^{i}\Sigma\right)\right]}{\mathbb{E}[Y_k + G_k]}.
\end{align}
\end{theorem}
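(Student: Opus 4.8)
The plan is to prove the theorem in two stages: first relate the time-average cost in $\mathcal{P}$ to an expectation over the stationary AoI by an ergodic argument, and then evaluate that expectation through a renewal-reward computation to obtain~\eqref{eq:ExpfDelta}.

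First I would compute the conditional second moment of the estimation error. Starting from~\eqref{eq:error}, $e_n = \sum_{i=1}^{\Delta_n} A^{i-1} W_{n-i}$, I would condition on $\Delta_n$. Under a stationary randomized policy $G_k$ depends only on $Y_{k-1}$ and an exogenous randomization, so $\Delta_n$ is a function of the transmission times and sampling decisions alone and is independent of the noise realizations $\{W_{n-i}\}$. Expanding $\|e_n\|_2^2 = e_n^\top e_n$ and using that the $W_{n-i}$ are zero-mean, mutually independent with covariance $\Sigma$, all cross terms vanish and each diagonal term gives $\mathbb{E}[W_{n-i}^\top A^{(i-1)^\top} A^{i-1} W_{n-i}] = \textsf{Tr}(A^{(i-1)^\top} A^{i-1} \Sigma)$. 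Re-indexing with $r = i-1$ yields $\mathbb{E}[\|e_n\|_2^2 \mid \Delta_n] = \sum_{r=0}^{\Delta_n - 1} \textsf{Tr}(A^{r^\top} A^r \Sigma) = f(\Delta_n)$.

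Next, invoking the stationarity and ergodicity of the error process noted in the footnote, I would apply Birkhoff's ergodic theorem to conclude $\lim_{n\to\infty} \frac{1}{n}\sum_{j=0}^{n-1} \|e_j\|_2^2 = \mathbb{E}[\|e_0\|_2^2]$ almost surely. Taking iterated expectations together with the previous step gives $\mathbb{E}[\|e_0\|_2^2] = \mathbb{E}[f(\Delta)]$. Hence the almost-sure value of the objective of $\mathcal{P}$ is exactly $\mathbb{E}[f(\Delta)]$, so minimizing one is equivalent to minimizing the other and the two problems are equivalent almost surely. To derive~\eqref{eq:ExpfDelta} I would then exploit the renewal structure of the AoI. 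Since $\mathbf{\pi}$ is degenerate and samples are never queued, the departure epochs $\{D_k\}$ form regeneration points: sample $k$ is generated $G_k$ slots after $D_{k-1}$ and departs $Y_k$ slots later, so the cycle length is $D_k - D_{k-1} = G_k + Y_k$. On the cycle $[D_{k-1}, D_k)$ the freshest received packet is $k-1$, whose age at delivery is $Y_{k-1}$; since $\Delta_n$ increments by one each slot, it sweeps the values $Y_{k-1}, Y_{k-1}+1, \ldots, Y_{k-1}+G_k+Y_k-1$. The reward accumulated per cycle is therefore $\sum_{j=Y_{k-1}}^{Y_{k-1}+G_k+Y_k-1} f(j)$, and the renewal-reward theorem gives $\mathbb{E}[f(\Delta)]$ as the expected per-cycle reward divided by the expected cycle length $\mathbb{E}[G_k + Y_k]$, which is precisely~\eqref{eq:ExpfDelta}.

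I expect the main obstacle to lie in the careful sample-path bookkeeping of the AoI over a renewal cycle---in particular pinning down that the age resets to $Y_{k-1}$ at $D_{k-1}$ and correctly identifying the index range of the sweep---together with rigorously justifying the stationarity and ergodicity needed for both the ergodic theorem and the renewal-reward theorem within the class $\mathcal{G}_\text{SR}$.
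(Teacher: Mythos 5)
Your argument is sound, but on the estimation-error side it takes a genuinely different route from the paper's. The paper never conditions $\|e_n\|_2^2$ on $\Delta_n$: it applies the cycle decomposition directly to the error sample path, defining per-cycle rewards $\zeta_k=\sum_{j=D_{k-1}}^{D_k-1}\|e_j\|_2^2$, proving these are identically distributed and $1$-dependent, invoking an \emph{extended} renewal-reward theorem [Theorem 2, \cite{champati:Infocom2019}] to obtain $\lim_{K\to\infty}\frac{1}{D_K}\sum_{k=1}^K\zeta_k=\mathbb{E}[\zeta_k]/\mathbb{E}[Y_k+G_k]$ almost surely, and then evaluating $\mathbb{E}[\zeta_k]$ by an explicit Gaussian/$\chi^2$ computation of the per-slot error variances inside a cycle. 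Your tower-property step $\mathbb{E}\bigl[\|e_n\|_2^2\mid\Delta_n\bigr]=f(\Delta_n)$---valid because under stationary randomized policies $\Delta_n$ is determined by $\{Y_k\}$, $\{G_k\}$ and exogenous randomization, hence independent of the $W_i$---replaces that per-cycle computation, and Birkhoff's theorem on the error process replaces the extended renewal-reward step, so you need the cycle decomposition only once, for the AoI functional, to produce \eqref{eq:ExpfDelta}. This is more economical; what it gives up is that it leans on the footnote's assertion that the \emph{error} process itself is stationary and ergodic, whereas the paper's renewal-reward argument delivers the almost-sure limit of the error time average from the regenerative structure alone, invoking ergodicity only for the AoI process.

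Two points in your write-up need tightening, both repairable with tools the paper already uses. First, the per-cycle AoI rewards $\sum_{j=Y_{k-1}}^{Y_{k-1}+G_k+Y_k-1}f(j)$ are not independent across cycles: consecutive cycles share a transmission time ($Y_{k-1}$ appears in both cycle $k-1$ and cycle $k$), so they are only $1$-dependent, and the classical renewal-reward theorem for i.i.d.\ cycles does not apply verbatim; you need the extended version the paper cites, or a strong law for identically distributed $1$-dependent sequences. Second, identifying the stationary expectation $\mathbb{E}[f(\Delta)]$ with the long-run ratio given by renewal-reward itself requires Birkhoff on the AoI process (this is exactly the paper's final step), so your chain is stationary expectation $=$ time average $=$ renewal ratio, and both equalities should be stated explicitly. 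Neither issue is a genuine gap.
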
 
\begin{proof}
The proof is given in the Appendix.
\end{proof}
%We note that $\Phi(\mathbf{g})$ is equal to $\mathbb{E}[f(\Delta)]$, where the function $f:\mathbb{Z}_{\geq 0} \rightarrow \mathbb{R}^{+}$, and is given by,
%\begin{align*}
%f(j) = \sum_{i=0}^{j-1} \textsf{Tr}\left(A^{i^\top}A^{i}\Sigma\right).
%\end{align*}
The result in Theorem~\ref{thm:equivalence:event} asserts that minimizing the time averaged square error norm is equivalent to minimizing the expected value of a specific function of AoI, i.e. $f(\Delta)$, with parameters $A$ $\Sigma$. In the following corollary we present a condition under which the solution of $\tilde{\mathcal{P}}$ is equivalent to minimizing the expected AoI $\mathbb{E}[\Delta]$.

\begin{corollary} \label{cor:equivalence}
The optimal solution of $\tilde{\mathcal{P}}$ is equivalent to minimizing the expected AoI $\mathbb{E}[\Delta]$ if there exists a constant $\gamma\in \mathbb{R}^+$ such that
\begin{align}\label{eq:equivalence:cond}
f(j) = \sum_{i=0}^{j-1} \textsf{Tr}\left(A^{i^\top}A^{i}\Sigma\right) = \gamma j,\,\, \forall j.
\end{align}
\end{corollary}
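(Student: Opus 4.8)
The plan is to obtain the result as an immediate consequence of Theorem~\ref{thm:equivalence:event} together with the linearity imposed by the hypothesis~\eqref{eq:equivalence:cond}. Theorem~\ref{thm:equivalence:event} has already reduced $\mathcal{P}$ to the problem $\tilde{\mathcal{P}}$ of minimizing $\mathbb{E}[f(\Delta)]$ over $\mathcal{G}_\text{SR}$, so it suffices to show that, under~\eqref{eq:equivalence:cond}, this objective is a fixed positive multiple of $\mathbb{E}[\Delta]$; scaling an objective by a positive constant does not change its set of minimizers.

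First I would substitute the hypothesis $f(j)=\gamma j$ into the renewal-reward expression~\eqref{eq:ExpfDelta}. Since the inner double sum appearing in the numerator is precisely $f(j)=\sum_{i=0}^{j-1}\textsf{Tr}(A^{i^\top}A^{i}\Sigma)$, replacing it by $\gamma j$ and pulling the constant $\gamma$ outside the expectation gives
\[
\mathbb{E}[f(\Delta)] = \gamma\,\frac{\mathbb{E}\left[\sum_{j=Y_{k-1}}^{Y_{k-1}+G_k+Y_k-1} j\right]}{\mathbb{E}[Y_k+G_k]}.
\]
The remaining ratio is exactly the renewal-reward representation of the mean age $\mathbb{E}[\Delta]$ over $\mathcal{G}_\text{SR}$: it is the same expression produced in the proof of Theorem~\ref{thm:equivalence:event}, but with the per-slot reward taken to be the age $j$ itself rather than $f(j)$. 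Equivalently, because $f(\Delta)=\gamma\Delta$ holds pathwise under~\eqref{eq:equivalence:cond}, linearity of expectation already yields $\mathbb{E}[f(\Delta)]=\gamma\,\mathbb{E}[\Delta]$.

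Finally I would invoke that $\gamma$ depends only on the system parameters $A$ and $\Sigma$ and not on the sampling policy $\mathbf{g}$, and that $\gamma\in\mathbb{R}^{+}$ is strictly positive. Hence $\argmin_{\mathbf{g}\in\mathcal{G}_\text{SR}}\mathbb{E}[f(\Delta)]=\argmin_{\mathbf{g}\in\mathcal{G}_\text{SR}}\mathbb{E}[\Delta]$, which is the claimed equivalence. The argument is essentially immediate; the only points requiring care are to confirm that the ratio obtained after factoring out $\gamma$ is indeed the mean-age functional over the same policy class $\mathcal{G}_\text{SR}$ (so that the renewal-reward derivation of Theorem~\ref{thm:equivalence:event} specializes verbatim), and to note that it is the strict positivity of $\gamma$, not merely its being a policy-independent constant, that preserves the direction of the optimization.
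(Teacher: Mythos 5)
Your proposal is correct and follows essentially the same route as the paper's own proof: substitute the linearity condition~\eqref{eq:equivalence:cond} into the renewal-reward expression~\eqref{eq:ExpfDelta}, recognize the resulting ratio as $\gamma$ times the corresponding expression~\eqref{eq:expAoI} for $\mathbb{E}[\Delta]$, and conclude that the minimizers coincide. Your added remarks (the pathwise identity $f(\Delta)=\gamma\Delta$ and the explicit role of $\gamma>0$ in preserving the argmin) only make explicit what the paper leaves implicit.
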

\begin{proof}
One can show that the expected AoI for sampling policy $\mathbf{g} = \{G_k,k\geq 1\}$ is given by,
\begin{align}\label{eq:expAoI}
\mathbb{E}[\Delta] = \frac{\mathbb{E}\left[\sum_{j=Y_{k-1}}^{Y_{k-1} + G_k + Y_k - 1} j \right]}{\mathbb{E}\left[G_k + Y_k \right]}.
\end{align}
Therefore, the result follows by substituting~\eqref{eq:equivalence:cond} in~\eqref{eq:ExpfDelta}, and~\eqref{eq:expAoI}.
\end{proof}
%Note that, for $\mathbf{g} \in \mathcal{G}_\text{SR}$, the expected AoI $\mathbb{E}[\Delta]$ is given by
%\begin{align}\label{thm:expAoI}
%\mathbb{E}[\Delta] = \frac{\mathbb{E}\left[\sum_{j=Y_{k-1}}^{Y_{k-1} + G_k + Y_k} j\right]}{\mathbb{E}[Y_k + G_k]}.
%\end{align}
%The following 

\begin{remark}
The class of orthogonal matrices satisfy condition (\ref{eq:equivalence:cond}) and therefore, meet the requirement of Corollary \ref{cor:equivalence}. To show this, we recall that any orthogonal matrix $M$, $M^\top=M^{-1}$ holds, and hence $M^\top M=I$. In addition, $(M^r)^\top=(M^\top)^r$ holds for orthogonal matrices. Having these equalities together with the condition (\ref{eq:equivalence:cond}) results in
\begin{align*}
\sum_{i=1}^j \textsf{Tr}\left(M^{i^\top}M^{i}\Sigma\right)&=\sum_{i=1}^j \textsf{Tr}\left((\underbrace{M^{\top}M}_{I})^{i}\Sigma\right)\\
&=\sum_{i=1}^j \textsf{Tr}\left(\Sigma\right)=j\;\textsf{Tr}\left(\Sigma\right)
\end{align*}
Hence, there always exists $\gamma=\textsf{Tr}\left(\Sigma\right)$, $\forall j$. It is worth noting that for the scalar case, $A \in \{1,-1\}$ are the only values that satisfy condition (\ref{eq:equivalence:cond}).
\end{remark}

\begin{remark}
Assuming that the $A$ matrix is orthogonal results in having a stochastic lossless open-loop linear system, i.e. removing the control from the equation (\ref{eq:state}), any increase in the mean size of the system state would correspond to stochastic noise. In fact, without any control input $\bar{U}_{n-1}$, we have $\mathbb{E}[\|X_{n}\|^2]=\mathbb{E}[\|X_{n-1}\|^2]+\Sigma$. This means even without control, such a stochastic system is mean-square bounded if $\Sigma$ and $X_0$ are both bounded. Having control results naturally in tighter bounds. Removing noise from the system (\ref{eq:state}), i.e having a deterministic system, and if $A$ is orthogonal we have a linear system that is marginally stable, i.e. $\|X_{n}\|^2=\|X_{n-1}\|^2$. Adding control to the noiseless system with orthogonal $A$ matrix, under controllability of the pair $(A,B)$, any stabilizing controller results in an asymptotically stable closed-loop system. 
\end{remark}

\subsubsection*{Heuristic Algorithms for Solving $\tilde{\mathcal{P}}$}
From Theorem~\ref{thm:equivalence:event} solving $\tilde{\mathcal{P}}$ is equivalent to solving $\mathcal{P}$, and thus we aim to solve $\tilde{\mathcal{P}}$. However, since $G_k$ are discrete variables, $\tilde{\mathcal{P}}$ is a stochastic combinatorial optimization problem, which is hard to solve, in general. Therefore, we aim for good heuristic solutions by first solving $\tilde{\mathcal{P}}$ by replacing the constraint with $G_k \in [0,M]$, and then round the resulting $G_k$ values. To this end, we use the algorithms proposed in~\cite{Sun_2017}\footnote{Recently, the authors in~\cite{Sun_2018a} proposed improved algorithms for a more generalized version of the problem studied in~\cite{Sun_2017}.}, where the authors have studied the problem of minimizing the expected value of any non-negative and non-decreasing function of AoI with $G_k \in [0,M]$. Note that $f(\cdot)$ is non-negative and non-decreasing function, however, the optimality results in~\cite{Sun_2017} are not directly applicable to $\tilde{\mathcal{P}}$ as the decision variable $G_k$ are discrete. The heuristic solutions we study are described below. 
\begin{itemize}
\item \textbf{Minimizing Expected AoI Solution (MEAS)}: In this heuristic we use [Algorithm 2,~\cite{Sun_2017}] to compute $G_k$. In [Algorithm 2,~\cite{Sun_2017}], whenever the transmission of packet $(k-1)$ is completed, $G_k$ is computed using the observed transmission time $y$ of that packet, or more formally using $Y_{k-1} = y$. For the sake of completeness, in Algorithm 1, we present the details of computation of $G_k$. Algorithm 1 outputs continuous values for $G_k$, i.e., $G_k \in [0,M]$ for all $k$. We round the values by using the floor function.
\begin{algorithm}
\begin{algorithmic}[1]
\caption{Algorithm for computing $G_k$}
\label{algo1}
\STATE \textbf{given} $l=0$, sufficiently large $u$, tolerance $\epsilon$
\REPEAT
\STATE $\beta \coloneqq (l+u)/2$
\STATE $G_k = \max(\beta-y,0)$
\STATE $o \coloneqq \mathbb{E}\left[(Y_k+G_k)^2 \right] - 2\beta \mathbb{E}[Y_k + G_k]$. 
\STATE \textbf{if} $o \geq 0$, $u \coloneqq \beta$; \textbf{else}, $l \coloneqq \beta$.
\UNTIL{$u - l \leq \epsilon$}
\STATE Compute $G_k = \max(\beta-y,0)$.
\STATE \textbf{return} $G_k$ 
\end{algorithmic}
\end{algorithm}

\item \textbf{Zero-wait policy}: Under the zero-wait policy, a new packet is generated and immediately transmitted only when the previous packet finishes its transmission.
\end{itemize}
We note that [Algorithm 1, \cite{Sun_2017}] can also be used to obtain a heuristic solution for $\mathcal{P}$. However, in our simulation we found that any trivial implementation of an extension of [Algorithm 1, \cite{Sun_2017}] for finding a heuristic solution for $\tilde{\mathcal{P}}$ results in a solution equivalent to zero-wait policy. Therefore, we leave the non-trivial extension of that algorithm for future work.

\section{Numerical Results}
In this section, we present our initial numerical results, where we study MEAS and zero-wait policies, and compare their performance under geometric transmission time distribution with success probability $p$. We consider single dimension system $d=1$, i.e., $A$ is a scalar. We observed similar results for $d=2$ which are not presented due to redundancy. 

In Figure~\ref{fig:solution_EventBased}, we present the solution provided by MEAS by plotting $G_k$ versus $y$, for different $p$ values. We truncate the x-axis values at $y = 10$. We observe that for smaller values of $p$, waiting times are larger. For example, when $p = 0.1$, we compute $G_k > 0$, for all $y \leq 8$, and $G_k = 7$, if $Y_k = 1$. To interpret this, when the probability of success is low and if the previous transmission happens successfully in fewer time slots than expected, then it is beneficial to wait before the next transmission. 
%This is because, once a packet is transmitted successfully in fewer time slots than expected, then to  make the next information update as effective as possible it is better to wait rather than immediately sampling the system state. 
\begin{figure}[t]
\centering
\includegraphics[width = 3.2in]{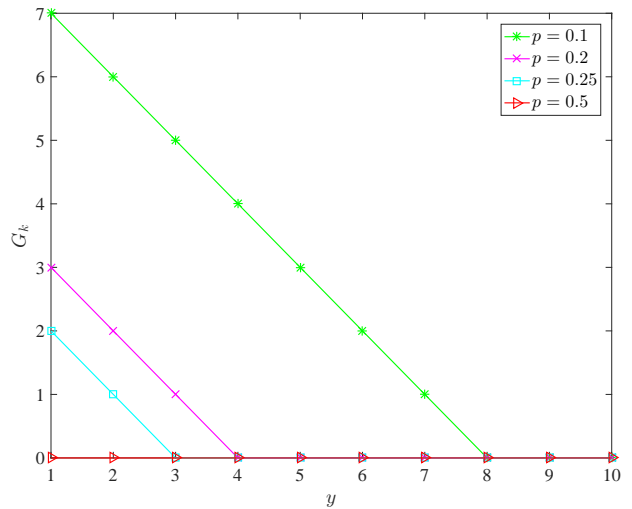}
\caption{Waiting time function $g(y)$ of MEAS algorithm.}
\label{fig:solution_EventBased}
\end{figure}

Under zero-wait policy, the expected AoI for geometric service times can be computed as follows:
\begin{align*}
\mathbb{E}[\Delta]& = \frac{\mathbb{E}[Y_k^2]}{2\mathbb{E}[Y_k]} + \mathbb{E}[Y_k] - \frac{1}{2}\\
&= \frac{4-p}{2p} - \frac{1}{2}.
\end{align*}
We note that the constant $\frac{1}{2}$ appears in the above expression as $\Delta_n$ in our system model is a discrete-time function. 
In Table~\ref{tabel1}, we compare the time-average squared error achieved by MEAS and zero-wait policy for $A = 1$. Since $A = 1$  satisfies condition~\eqref{eq:equivalence:cond}, for this case solving $\mathcal{P}$ is equivalent to minimizing $\mathbb{E}[\Delta]$ (cf. Corollary~\ref{cor:equivalence}), and we expect the solution provided by MEAS to be near optimal. Note that MEAS provides lower value than zero-wait policy, but the difference is not significant.
%\[\begin{\bmatrix} 1 \end{\bmatrix}\]
\begin{table}[ht]
\renewcommand{\arraystretch}{1.2}
\caption{Comparison of time-average squared error under zero-wait policy and MEAS algorithm for $A = 1$.}
\centering
\begin{tabular}{l|c|c}

p & Zero wait & MEAS\\
\hline
%0.1  & 19.50 & 18.71 \\
0.01  & 199 & 189.15 \\
\hline
0.05 & 39 & 37.22 \\
\hline
0.1 & 19 & 18.21 \\
\hline
%0.2  & 9.50 & 9.20 \\
0.2  & 9 & 8.70 \\
\hline
%0.25 & 7.50 & 7.29 \\
0.4 & 4 & 4 \\
\hline
0.8 & 1.5 & 1.5 \\
\end{tabular}
\label{tabel1}
\end{table}

In Figure~\ref{fig:comparison_alpha_lessthan_1}, we compare MEAS and zero-wait policy for $A$ values less than $1$. We again observe the same trend as before. We conclude that, for geometric transmission time distribution zero-wait policy is favorable as it has lower computational complexity and achieves time-average squared error that is negligibly close to that of MEAS. Finally, we note that for $A$ values greater than $1$, the time-average squared error approaches infinity.
\begin{figure}[h]
\centering
\includegraphics[width = 3.2in]{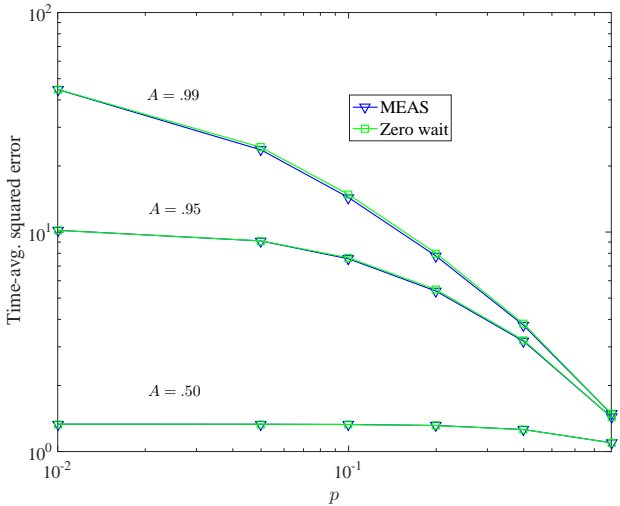}
\caption{Time average squared error under different algorithms for varying $p$ and different $A$ values.}
\label{fig:comparison_alpha_lessthan_1}
\end{figure}

\section{Conclusions and Future Work}
The motivation for this work is to investigate the applicability of the strategies/solutions proposed in the AoI literature to typical estimation/control problems in NCSs. As a first step, in this work, we have studied the joint optimal sampling and scheduling policies of a single-loop stochastic LTI networked
system. The main contribution has been in characterizing the estimation error as a function of AoI.
More precisely, for the given system model with a single source, i.i.d. service/transmission times and event-based sampling, we have shown that minimizing the time-average squared norm of the estimation error is equivalent to minimizing $\mathbb{E}[f(\Delta)]$, where $f$ is a non-negative and non-decreasing function with parameters $A$ (the LTI matrix) and $\Sigma$ (the noise covariance). 
%We have shown that the common objective function of mean-square estimation error can be expressed as a function of AoI. 
Moreover, it is discussed that certainty equivalence controllers can be realized under certain assumptions
on the information structures available for the control unit and
the sampling unit. Further, we have provided a condition for $A$ under which minimizing the time averaged square norm is equivalent to minimizing expected AoI. Noting that minimizing $\mathbb{E}[f(\Delta)]$ is a stochastic combinatorial optimization problem, we resort to studying heuristic algorithms MEAS (an extension of Algorithm 2~\cite{Sun_2017}) and the zero-wait policy. Our initial numerical results show that, under geometric service time distribution, the zero-wait policy, despite having lower computational complexity, achieves only slightly higher estimation error in comparison with MEAS.

We leave the comprehensive numerical analysis studying different heuristic solutions under different service-time distributions for future work. We also aim to find heuristics solutions with performance guarantees, for solving $\tilde{\mathcal{P}}$. 

%\bibliographystyle{IEEEtran}
%\bibliography{main}
% Generated by IEEEtran.bst, version: 1.14 (2015/08/26)

\appendix
\subsection{Proof of Theorem~\ref{thm:equivalence:event}}
Since $\mathbb{E}[Y] < \infty$, and $Z_k < M$ for all $k$, number of samples transmitted goes to infinity as $n$ goes to infinity. Therefore, we divide the time line using inter-departure times $\{D_k-D_{k-1},k\geq 1\}$, and reformulate the objective function as follows: 
\begin{align}\label{eq1:thm1}
\lim_{n \rightarrow \infty} \frac{1}{n} \sum_{j=0}^{n-1}\|e_j\|_2^2 &= \lim_{K \rightarrow \infty} \frac{1}{D_K} \sum_{k=1}^{K}\sum_{j=D_{k-1}}^{D_k-1}\|e_j\|_2^2,  
\end{align}
where $D_0 = 0$, and the inter-departure time is given by
\begin{align*}
D_k - D_{k-1} = Y_k + G_k.
\end{align*}
Recall that $Y_k$'s are i.i.d. Under a stationary randomized policy, $G_k$ only depends on $Y_{k-1}$, and thus $G_k$'s are also i.i.d. Therefore, $(D_k - D_{k-1})$'s are i.i.d as well, $\forall k$. 

Let $\zeta_k \triangleq \sum_{j=D_{k-1}}^{D_k-1}\|e_j\|_2^2$. In the following we show that the random variables $\zeta_k$ are identically distributed and $\zeta_k$ is independent of $\{\zeta_i, i\leq k-2\}$. We note that, for $j = D_{k-1}$ we have $\Delta_{D_{k-1}} = Y_{k-1}$, because at the departure instant of sample $(k-1)$ AoI reduces to sample $(k-1)$'s waiting time in queue, which is zero, plus its transmission time $Y_{k-1}$. Since AoI increases by one in each time slot until next departure it is easy to see that for $j = D_k - 1$, $\Delta_{D_k - 1} = Y_{k-1} + G_k + Y_k - 1$. Therefore, for $D_{k-1} \leq j \leq D_k-1$, we have 
\begin{align*}
e_j = \sum_{i=1}^{\Delta_j} A^{i-1} W_{j-i} = \sum_{i=1}^{Y_{k-1} + j - D_{k-1}} A^{i-1} W_{j-i}.
\end{align*}
From the above analysis and using a change of variable, we obtain
\begin{align}\label{eq11:thm1}
\zeta_k = \sum_{j=D_{k-1}}^{D_k-1}\|e_j\|_2^2 = \sum_{j=Y_{k-1}}^{Y_{k-1} + G_k + Y_k - 1}\|e_{jk}\|_2^2, 
\end{align}
where 
\begin{align*}
e_{jk} = \sum_{i=1}^{j}A^{i-1}W_{D_{k-1} + j - Y_{k-1}  - i}.
\end{align*}
Note that $\zeta_k$ is a function of $Y_{k-1}$, $G_k$, $Y_k$ and $\{W_i,D_{k-1}-Y_{k-1} \leq i \leq D_{k-1}+G_k+Y_k-1\}$. It is easy to see that $\zeta_k$ are identically distributed. Further, since $Y_k$ are i.i.d., $G_k$ is a function of $Y_{k-1}$ and $W_i$ are i.i.d, we infer that $\zeta_k$ is independent of $\{\zeta_i, i\leq k-2\}$. 

Given the above properties of $\{\zeta_k\}$ and $\{D_k - D_{k-1}\}$, we use the extended renewal reward theorem [Theorem 2~\cite{champati:Infocom2019}] and obtain the following result.
\begin{align}\label{eq2:thm1}
\lim_{K \rightarrow \infty} \frac{1}{D_K} \sum_{k=1}^{K}\zeta_k = \frac{\mathbb{E}[\zeta_k]}{\mathbb{E}[D_k - D_{k-1}]}\; \text{a.s.}
\end{align}
Using~\eqref{eq2:thm1}, \eqref{eq11:thm1} and~\eqref{eq1:thm1}, we obtain
\begin{align}\label{eq21:thm1}
\lim_{n \rightarrow \infty} \frac{1}{n} \sum_{j=0}^{n-1}\|e_j\|_2^2 = \frac{\mathbb{E}[\sum_{j=Y_{k-1}}^{Y_{k-1} + G_k + Y_k - 1}\|e_{jk}\|_2^2]}{\mathbb{E}[Y_k + G_k]}\; \text{a.s.}
\end{align}

For $1 \leq r \leq d$, let $e_{jk}(r)$ denote the $r$th element of the vector $e_{jk}$. Let $a^{(i)}(r,l)$ denote the element in $r^{\text{th}}$ row and $l^{\text{th}}$ column of the matrix $A^i$, and $w_i(r)$ denote the $r$th element of the vector $W_i$. Note that, for all $r$, $e_{jk}(r)$ is a linear combination of i.i.d. Gaussian random variables. Therefore $e_{jk}(r)$ is Gaussian with zero mean and variance 
\begin{align}\label{eq3:thm1}
\sigma^2_{e_{jk}(r)} = \sigma^2 \sum_{i=1}^j \sum_{l=1}^{d} [a^{(i-1)}(r,l)]^2.
\end{align}
This implies that for any $j$ and $k$, $\|e_{jk}\|_2^2$ is $\chi^2$-distributed and 
\begin{align}\label{eq4:thm1}
\mathbb{E}[\|e_{jk}\|^2] = \sum_{r=1}^{d} \sigma^2_{e_{jk}(r)}. 
\end{align}
Now, we have 
\begin{align}\label{eq5:thm1}
&\mathbb{E}\left[\sum_{j=Y_{k-1}}^{Y_{k-1} + G_k + Y_k - 1}\|e_{jk}\|_2^2 \right] \nonumber\\
&=\mathbb{E} \left[\mathbb{E}\left[\sum_{j=Y_{k-1}}^{Y_{k-1} + G_k + Y_k - 1}\|e_{jk}\|_2^2 \middle| Y_{k-1},Y_k \right]\right]\nonumber \\
&= \sigma^2 \mathbb{E}\left[\sum_{j=Y_{k-1}}^{Y_{k-1} + G_k + Y_k - 1} \sum_{i=1}^j \sum_{r=1}^{d} \sum_{l=1}^{d}  [a^{(i-1)}(r,l)]^2 \right] \nonumber \\
&= \mathbb{E}\left[\sum_{j=Y_{k-1}}^{Y_{k-1} + G_k + Y_k - 1} \sum_{i=0}^{j-1} \textsf{Tr}\left(A^{i^\top}A^{i}\Sigma\right)\right]
\end{align}
In the second step above, we have used~\eqref{eq3:thm1} and~\eqref{eq4:thm1}. In the last step we have used $\Sigma = \sigma^2 \mathbf{I}_{d}$. From from~\eqref{eq21:thm1} and~\eqref{eq5:thm1}, we have, almost surely,
\begin{align}\label{eq6:thm1}
&\lim_{n \rightarrow \infty} \frac{1}{n} \sum_{j=0}^{n-1}\|e_j\|_2^2 \nonumber\\
 &= \frac{\mathbb{E}\left[\sum_{j=Y_{k-1}}^{Y_{k-1} + G_k + Y_k - 1} \sum_{i=0}^{j-1} \textsf{Tr}\left(A^{i^\top}A^{i}\Sigma\right)\right]}{\mathbb{E}[Y_k + G_k]}.
\end{align}
To finish the proof it is sufficient to show that $\mathbb{E}[f(\Delta)]$ is equal to~\eqref{eq6:thm1}. Since $\Delta_n$ is stationary and ergodic, by Birkhoff’s ergodic theorem~\cite{Kumar_2004}, almost surely,
\begin{align}\label{eq7:thm1}
\mathbb{E}[f(\Delta)] &= \lim_{n \rightarrow \infty} \frac{1}{n} \sum_{j=0}^{n-1} f(\Delta_j) \nonumber\\
&= \lim_{K \rightarrow \infty} \frac{1}{D_K} \sum_{k=1}^{K}\sum_{j=D_{k-1}}^{D_k-1} f(\Delta_j).
\end{align}
To simplify~\eqref{eq7:thm1}, we use steps similar to that used in deriving~\eqref{eq21:thm1} from~\eqref{eq1:thm1}, and obtain
\begin{align*}
\lim_{K \rightarrow \infty} \frac{1}{D_K} \sum_{k=1}^{K}\sum_{j=D_{k-1}}^{D_k-1} f(\Delta_j) = \frac{\mathbb{E}\left[\sum_{j=Y_{k-1}}^{Y_{k-1} + G_k + Y_k - 1} f(j)\right]}{\mathbb{E}[Y_k + G_k]}.
\end{align*}
Hence the result is proven.

\begin{comment}
\subsection{Proof of Corollary~\ref{cor:equivalence}}\label{cor:equivalence:proof}
Substituting $\sum_{i=1}^j \textsf{Tr}\left(A^{i^\top}A^{i}\Sigma\right) = \gamma j$ in the objective function of $\tilde{\mathcal{P}}$, we obtain \begin{align}\label{eq1:pi1}
&\mathbb{E}\left[\sum_{j=\Y_{k-1}}^{Y_{k-1} + G_k + Y_k - 1} \sum_{i=1}^j \textsf{Tr}\left(A^{i^\top}A^{i}\Sigma\right)\right] =  \gamma \mathbb{E}\left[\sum_{j=Y_{k-1}}^{Y_{k-1} + G_k + Y_k - 1} j \right] \nonumber \\
&= \gamma \mathbb{E}\left[\sum_{j=1}^{Y_{k-1} + G_k + Y_k - 1} j - \sum_{j=1}^{Y_{k-1} - 1} j\right] \nonumber \\
&= \frac{\gamma}{2} \mathbb{E}[ (Y_{k-1} + G_k + Y_k)^2 - (Y_{k-1} + G_k + Y_k) \nonumber\\
& \quad \quad \quad\quad - (Y^2_{k-1} - Y_{k-1})] \nonumber \\
&= \frac{\gamma}{2} \mathbb{E}[Y^2_k + 2Y_k(Y_{k-1}+G_k) + (Y_{k-1}+G_k)^2 \nonumber\\
& \quad \quad \quad\quad - (G_k+Y_k) - Y^2_{k-1}] \nonumber \\
&= \frac{\gamma}{2} (2\mathbb{E}[Y_k]\mathbb{E}[Y_{k-1}+G_k] + \mathbb{E}[(Y_{k-1}+G_k)^2] - \mathbb{E}[G_k + Y_k]).
\end{align}
Therefore,
\begin{align*}
&\frac{\mathbb{E}\left[\sum_{j=Y_{k-1}}^{Y_{k-1} + G_k + Y_k - 1} \sum_{i=1}^j \textsf{Tr}\left(A^{i^\top}A^{i}\Sigma\right)\right]}{\mathbb{E}\left[G_k + Y_k\right]} \\
&= \gamma (\mathbb{E}[Y_k]+ \frac{\mathbb{E}[(Y_{k-1}+G_k)^2]}{2\mathbb{E}[G_k + Y_k]} - \frac{1}{2})\\
&= \gamma (\mathbb{E}[\Delta] - \frac{1}{2}).
\end{align*}
\end{comment}

\end{document}